\title{Generalizing Birkhoff}
\date{}
\newcommand{\eee}{\mathbf{e}}
\newcommand{\xxx}{\mathbf{x}}
\newcommand{\aaa}{\mathbf{a}}
\newcommand{\bbb}{\mathbf{b}}
\newtheorem{thm}{Theorem}
\newtheorem{lem}{Lemma}
\newtheorem{exam}{Example}
\newtheorem{cor}{Corollary}
\newtheorem{defin}{Definition}
\begin{document}

\begin{center}

Joel L. Weiner

Generalizing Birkhoff

University of Hawaii at Manoa, Department of Mathematics\\
Honolulu, HI USA

joel@math.hawaii.edu

\vspace{.5in}

{\bf  Abstract}

\end{center}

We generalize Birkhoff's Theorem in the following fashion.  We find necessary and sufficient conditions for any spherically symmetric space-time to be static in terms of the eigenvalues of the stress-energy tensor.  In particular, we generalize the Tolman-Oppenheimer-Volkoff equation and prove that Birkhoff's theorem holds under the weaker hypothesis of no pressure (with respect to an appropriate frame.)  We provide equations that show how the coefficients of the metric relate to the eigenvalues of the stress-energy tensor.    These involve integrals that are simple functions of those eigenvalues.  We also determine among all static spherically symmetric space-times those that are asymptotically flat.  A few examples are presented taking advantage of the results.  The calculations are done by viewing the space-times as warped products and the computations are done using Cartan's moving frames approach.

\vspace{.5in}

\begin{center}
{\bf  Key Words}

spherically symmetric, static, asymptotically flat

\vspace{.5in}

\end{center}
\newpage

\maketitle

\section{Introduction}
	Birkhoff's Theorem deals with spherically symmetric space-times.  It asserts that if the stress-energy tensor ${\bf T} = \mathbf{0}$, then such a space-time is static.  This further implies the existence of additional symmetries and, if it makes sense to talk about space at infinity, that the space-time is asymptotically flat.  The most famous example of a  spherically symmetric metric is the Schwarzschild metric, which has a trivial stress-energy tensor and, of course, is static.  There are other examples of  spherically symmetric space-time metrics that are static such as the de Sitter space-time within an appropriately restricted domain and the Reissner-Nordstr\"om space-time.  The former is not asymptotically flat but the latter is.

   There have been many generalizations of Birkhoff's Theorem, most from the following viewpoint:   ``Given a space-time acted upon by a group of symmetries,  under what conditions do additional symmetries exist?"  A number of papers, for example  Bokhari et al. \cite{b} and Shanarii et al. \cite{sh}, look for necessary conditions on static spherically symmetric space-times in order that there exist additional symmetries.   Others, such as Szenthe \cite{sz} look for sufficient conditions for the existence of a Killing vector field orthogonal to the orbits of the action of a given symmetry group.
In fact, the paper by Szenthe gives a thorough review of the literature on that approach to the generalization of Birkhoff's Theorem.

Approaches such these just mentioned focus on the metric.
Other approaches to generalizing Birkhoff look for conditions on the stress-energy tensor ${\bf T}$ of a spherically symmetric space-time in order for it to be static.  Wald \cite{w}, pp.125--128, examines that situation when the stress-energy tensor is that of a perfect fluid and derives the {\em Tolman-Oppenheimer-Volkoff} (TOV) equation.  Moreover, in a paper by Cattoen et al. \cite{c} an anisotropic version of the TOV equation is derived.  These TOV equations arise as necessary conditions because, in both cases, it is assumed from the outset that the metric is static.

Our goal is to find \emph{necessary and sufficient} conditions on ${\bf T}$ for a spherically symmetric space-time to be static and show exactly how such metrics relate to ${\bf T}$.  Then, among those space-times that are static, we further determine which are asymptotically flat.  The paper is organized as follows:  In Section 2 we introduce the moving frame notation we use throughout the paper and present some preliminary calculations; in particular, we get formulas for the Ricci forms associated to a moving frame.  In Section 3, we use the formulas for the Ricci forms to prove the standard Birkhoff Theorem in order to familiarize the reader to the techniques about to be employed in the generalizations.  Section 4 is devoted to our first generalization where we consider stress-energy transformations which have at most 2 distinct eigenvalues.  Section 5 considers the most general situation where the stress-energy transformation may have up to 3 distinct eigenvalues.  It is in this section that we obtain a generalization of the TOV equation and also prove, as a corollary of the main theorem, the following result:  If a spherically symmetric space-time has no pressures with respect to what we call the Birkhoff framing, then it has no energy density with respect to that framing, and thus is a vacuum.  Because of this result, the original Birkhoff Theorem holds under a weaker assumption; this is a second corollary.  A third corollary to the main theorem asserts that spherically symmetric dust cannot be static.  Finally, Section 6 is devoted to determining which  static  spherically symmetric space-times are asymptotically flat.

Lastly, we recommend the paper by Szenthe \cite{sz} for a through presentation of differential topological aspects of the problem under consideration and hence as a good jumping off point for what is to follow.

\section{Preliminary Calculations}

Let us denote by $X$ a space-time which is spherically symmetric.  As a consequence of the spherical symmetry, there exists a fiber bundle $\beta: X \to B$ whose fibers are the 2-spheres that are the orbits of the isometry group that accounts for the spherically symmetric structure.  We may assume, without loss of generality that $B$ is diffeomorphic to an open ball and thus $X = B \times \mathbf{S}^2$.  This is certainly true locally.  Moreover, due to the fact that the symmetries are isometries and the orbits are spacial, the metric $\bar{\mathbf{g}}$ on $X$ projects to a nonsingular metric $\mathbf{g}$ on $B$ which has index 1.

Through each event $\xxx \in X$ there is a round 2-sphere of curvature $\frac{1}{r^2}$.  This determines a function $r : X \to \mathbb{R}$ which we assume is regular; this function projects to a regular function on $B$ which we also denote by $r$. We choose $r$ to be a coordinate function on $B$.  Necessarily there exists a regular function $t$ on $B$ such that $dr$ and $dt$ are orthogonal with respect to $\mathbf{g}$.  We then have a coordinate system $t,r$ on $B$ and we finally assume that the image of $B$ under $(t,r): B \to \mathbb{R}^2$ is a rectangle $(t_0, t_1) \times (r_0 , r_1)$, where $0 < t_0 < t_1 \leq \infty$ and   $0 < r_0 < r_1 \leq \infty$.  Again this is always true locally, so we have not imposed any severe restrictions by these assumptions. We finally observe that $X$ is a warped product; in fact
\[
X = B \times_r \mathbf{S}^2,
\]
in the notation of O'Neill \cite{o} .  Thus, if we let $\mathbf{h}$ denote the metric on the round 2-sphere of radius 1, 
\[
\bar{\mathbf{g}} = \mathbf{g} + r^2 \mathbf{h}.
\]
We intend to stay faithful to O'Neill's terminology throughout this paper.

We will let $\bar{\bf{R}}$ denote the Ricci curvature tensor on $X$ viewed as a symmetric bilinear form.  This notation is convenient and will not lead to any confusion since we can deal with the Riemann curvature tensor without introducing a symbol to represent it.
Will we use $\bf{R}$ to denote the Ricci tensor for $B$, and, of course, the Ricci tensor for $\mathbb{S}^2$ is $\bf{h}$.  Our first goal is to obtain formulas for  $\bar{\bf{R}}$ since that is intimately connected to $\bf{T}$ by means of Einstein's equation.  
 
We will be working with moving frames.  Thus we introduce all of the following, all of which are defined locally.  Let $\aaa_\alpha, \alpha = 0, 1$, be an  orthonormal frame field on $B$ with $\aaa_0$ time-like.  We denote its dual frame field by $\theta^\alpha$.  Let $\bbb_\mu, \mu = 2,3$, be an orthonormal frame field on $\mathbb{S}^2$ with dual frame field $\psi^\mu$.
 Then, there exist connection forms ${\theta^\alpha}\!_\beta$  on $B$ such that
\[
d\theta^\alpha = - {\theta^\alpha}\!_\beta \wedge \theta^\beta.
\]
If ${\Theta^\alpha}\!_\beta$ are the curvature forms associated with $\theta^\alpha$, then 
\[
{\Theta^\alpha}\!_\beta = d{\theta^\alpha}\!_\beta. 
\]
In addition, there exist connection forms ${\psi^\mu}\!_\nu$ on $\mathbb{S}^2$ such that
 \[
 d\psi^\mu = - {\psi^\mu}\!_\nu \wedge \psi^\nu
 \]
 and curvature forms ${\Psi^\mu}\!_\nu$, where
 \[
 {\Psi^\mu}\!_\nu = d{\psi^\mu}\!_\nu  = \psi^\mu \wedge \psi_\nu.
 \]
 
 On $X$, we also  let $\aaa_\alpha$ and $\bbb_\mu$ stand for  the horizontal and vertical vector fields that project to $\aaa_\alpha$ on $B$ and $\bbb_\mu$ on $\mathbb{S}^2$, respectively.  In addition, the pullbacks of all the forms introduced above to $X$ by means of projection mappings onto $B$ and $\mathbb{S}^2$ will be denoted by the same symbols.  Define vector fields $\bar{\aaa}_\alpha$ and $\bar{\bbb}_\mu$ on $X$ by
 \[
 \bar{\aaa}_\alpha = \aaa_\alpha \quad {\rm and } \quad \bar{\bbb}_\mu = \frac{1}{r}\bbb_\mu.
 \]
 Then $\bar{\aaa}_\alpha, \bar{\bbb}_\mu$ is an orthonormal frame on $X$ with dual frame $\bar{\theta}^\alpha, \bar{\psi}^\mu$ where
 \[
 \bar{\theta}^\alpha = \theta^\alpha \quad {\rm and} \quad \bar{\psi}^\mu = r \psi^\mu.
 \]
 
 We introduce connection forms ${\bar{\theta}^\alpha}\!_\beta, {\bar{\psi}^\mu}\!_\nu, {\bar{\omega}^\alpha}\!_\mu$ and ${\bar{\omega}^\mu}\!_\alpha$ where
 \begin{eqnarray*}
 d\bar{\theta}^\alpha & = & - {\bar{\theta}^\alpha}\!_\beta \wedge \bar{\theta}^\beta - {\bar{\omega}^\alpha}\!_\mu \wedge \bar{\psi}^\mu,\\
 d\bar{\psi}^\mu & = & - {\bar{\omega}^\mu}\!_\alpha \wedge \bar{\theta}^\alpha - {\bar{\psi}^\mu}\!_\nu \wedge \bar{\psi}^\nu
 \end{eqnarray*}
 and note that ${\bar{\omega}^\alpha}\!_\mu = \pm {\bar{\omega}^\mu}\!_\alpha$, the sign depending on the metrics $\bf{g} = \theta_\alpha \theta^\alpha$ on $B$ and $\bf{h} = \psi_\mu\psi^\mu$ on $\mathbb{S}^2$.  
 We now compute ${\bar{\theta}^\alpha}\!_\beta, {\bar{\psi}^\mu}\!_\nu,
 {\bar{\omega}^\mu}\!_\alpha$ 
 and ${\bar{\omega}^\alpha}\!_\mu$.  We note that
 \[
 d\bar{\theta}^\alpha = d\theta^\alpha = - {\theta^\alpha}\!_\beta \wedge \theta^\beta = - {\theta^\alpha}\!_\beta \wedge \bar{\theta}^\beta.
 \]
 This implies that
 \[
 {\bar{\omega}^\alpha}\!_\mu \wedge \bar{\psi}^\mu = 0.
 \]
If we define $r_\alpha$ by $dr = r_\alpha \theta^\alpha$, then
\begin{eqnarray*}
d\bar{\psi}^\mu & = & d(r \psi^\mu) = r_\alpha \theta^\alpha \wedge \psi^\mu - r {\psi^\mu}\!_\nu \wedge \psi^\nu\\
& = & -r_\alpha \psi^\mu \wedge \bar{\theta}^\alpha - {\psi^\mu}\!_\nu \wedge \bar{\psi}^\nu.
\end{eqnarray*}
 From these last two equations we see that
 \[
 {\bar{\theta}^\alpha}\!_\beta = {\theta^\alpha}\!_\beta, \quad {\bar{\psi}^\mu}\!_\nu = {\psi^\mu}\!_\nu \quad {\rm and } \quad 
 {\bar{\omega}^\mu}\!_\alpha = r_\alpha \psi^\mu = \frac{r_\alpha}{r} \bar{\psi}^\mu.
 \]
 It then follows that 
 \[
 {\bar{\omega}^\alpha}\!_\mu = -\frac{r^\alpha}{r} \bar{\psi}_\mu
 \]
 since
 \begin{eqnarray*}
 {\bar{\omega}^\alpha}\!_\mu &=& \bar{g}^{\alpha \beta} \bar{\omega}_{\beta \mu} = -\bar{g}^{\alpha \beta} \bar{\omega}_{\mu \beta}
 = - \bar{g}^{\alpha \beta}\bar{g}_{\mu\nu} {\bar{\omega}^\nu}\!_\beta\\
 & = & -\bar{g}^{\alpha \beta}\bar{g}_{\mu\nu} \frac{r_\beta}{r} \bar{\psi}^\nu = - \frac{r^\alpha}{r} \bar{\psi}_\nu,
 \end{eqnarray*}
 keeping in mind that $\bar{g}_{\alpha \mu} =0$.
 
 Before we proceed we discuss some notation.  Let $\bf{}S$ be a symmetric bilinear form on a vector space $V$ with frame $\eee_i$ and dual frame $\omega^i$.  We may write
 \[
 {\bf{S}} = S_{ij} \omega^i \omega^j,\ {\rm with} \ S_{ij} = S_{ji},
 \]
 where we regard this as a the tensor product as opposed to a symmetric product.  We may introduce 1-forms $S_i$  by setting
 \[
 S_i = S_{ij} \omega^j
  \]
 and write 
 \[
 {\bf{S}} = S_i \omega^i.
 \]
 The 1-forms $S_i$ are characterized by the fact that 
 \[
 {\bf{S}} = S_i \omega^i \quad {\rm and}\quad S_i(\eee_j) = S_j(\eee_i)
 \]
 and will be referred to as   ${\bf{S}}$ 1-forms (without explicit mention of the  frame $\eee_i$.) 
 
 We illustrate the use of this notation in two cases for tensor fields on a Semi-Riemannian manifold $M$ with frame field $\eee_i$ and dual frame field $\omega^i$.
 
 First, we consider a smooth function $f : M \to \mathbb{R}$.  The Hessian of $f$, denoted ${\bf{H}}f$, is defined by
 $ {\bf{H}}f = f_{ij}\omega^i \omega^j$, where $f_{ij}$ satisfies the following:
 \[
 f_{ij} \omega^j = df_i - f_j {\omega^j}\!_i  \quad {\rm and}\quad f_{ij} = f_{ji}.
 \]
 Necessarily, 
 \[
 ({\bf{H}}f)_i = f_{ij} \omega^j.
 \]

Second, we consider the Ricci tensor $\bf{R}$ on $M$.  If ${\Omega^i}\!_j$ are the curvature forms associated with the  given frame field, then it is straightforward to show that
\[
R_j = \eee_i \lrcorner {\Omega^i}\!_j.
\]

  We now compute curvatures. 
 \begin{eqnarray*}
{\bar{\Theta}^\alpha}\!_\beta & = & d {\bar{\theta}^\alpha}\!_\beta + {\bar{\theta}^\alpha}\!_\gamma \wedge {\bar{\theta}^\gamma}\!_\beta + {\bar{\omega}^\alpha}\!_\mu \wedge {\bar{\omega}^\mu}\!_\beta\\
& = & {\Theta^\alpha}\!_\beta - \frac{r^\alpha}{r}\bar{\psi}_\mu \wedge \frac{r_\beta}{r} \bar{\psi}^\mu\\
&= & {\Theta^\alpha}\!_\beta.
 \end{eqnarray*}
 \begin{eqnarray*}
 {\bar{\Psi}^\mu}\!_\nu &=& d{\bar{\psi}^\mu}\!_\nu + {\bar{\omega}^\mu}\!_\alpha \wedge {\bar{\omega}^\alpha}\!_\nu + {\bar{\psi}^\mu}\!_\lambda \wedge {\bar{\psi}^\lambda}\!_\nu\\
 &=& {\Psi^\mu}\!_\nu - \frac{r_\alpha r^\alpha}{r^2} \bar{\psi}^\mu \wedge \bar{\psi}_\nu\\
 &=& {\Psi^\mu}\!_\nu - \frac{||\nabla r||^2}{r^2} \bar{\psi}^\mu \wedge \bar{\psi}_\nu.
 \end{eqnarray*}
 \begin{eqnarray*}
 {\bar{\Omega}^\mu}\!_\alpha & = & d{\bar{\omega}^\mu}\!_\alpha + {\bar{\omega}^\mu}\!_\beta \wedge {\bar{\theta}^\beta}\!_\alpha + {\bar{\psi}^\mu}\!_\nu \wedge {\bar{\omega}^\nu}\!_\alpha\\
 &=& d(r_\alpha \psi^\mu) + \frac{r_\beta}{r} \bar{\psi}^\mu \wedge {\theta^\beta}\!_\alpha + {\psi^\mu}\!_\nu \wedge r_\alpha \psi^\nu\\
 &=& dr_\alpha \wedge \psi^\mu + r_\alpha  d\psi^\mu + r_\alpha {\psi^\mu}\!_\nu \wedge \psi^\nu - r_\beta {\theta^\beta}\!_\alpha \wedge \psi^\mu\\
 &=& \frac{1}{r}r_{\alpha\beta} \bar{\theta}^\beta \wedge \bar{\psi}^\mu\\
 & =&  -\bar{\psi}^\mu \wedge     \frac{1}{r} ({\bf{H}} r)_\alpha.
 \end{eqnarray*}
  
   Of course,
 \[
 {\bar{\Omega}^\alpha}\!_\mu = -  \frac{1}{r}({\bf{H}}r)^\alpha \wedge   \bar{\psi}_\mu.
\]

 Next we compute the Ricci forms $\bar{\rho}_\alpha$ and $\bar{\rho}_\mu$ on $X$.  Let $\rho_\alpha$ and $\rho_\mu = \psi_\mu$ denote the Ricci forms on $B$ and $\mathbb{S}^2$, respectively.
 \begin{eqnarray*}
 \bar{\rho}_\alpha &=& \bar{\aaa}_\beta \lrcorner {\bar{\Theta}^\beta}\!_\alpha + \bar{\bbb}_\mu\lrcorner {\bar{\Omega}^\mu}\!_\alpha\\
 &=& \aaa_\beta \lrcorner {\Theta^\beta}\!_\alpha + \bar{\bbb}_\mu \lrcorner \left(\frac{1}{r}r_{\alpha\beta}\bar{\theta}^\beta \wedge \bar{\psi}^\mu \right)\\
 & = & \rho_\alpha - \frac{2}{r}r_{\alpha\beta} \bar{\theta}^\beta\\
 & = &\rho_\alpha - \frac{2}{r} ({\bf{H}}r)_\alpha.
 \end{eqnarray*}
 \begin{eqnarray*}
 \bar{\rho}_\mu & = & \bar{\aaa}_\alpha \lrcorner {\bar{\Omega}^\alpha}\!_\mu + \bar{\bbb}_\nu\lrcorner {\bar{\Psi}^\nu}\!_\mu\\
 &=& \bar{\aaa}_\alpha \lrcorner\left(\frac{1}{r} \bar{\psi}_\mu \wedge ({\bf{H}}r)^\alpha\right) + \bar{\bbb}_\nu \lrcorner \left({\Psi^\nu}\!_\mu - \frac{||\nabla r||^2}{r^2} \bar{\psi}^\nu \wedge \bar{\psi}_\mu\right).\\
 & = & -\frac{1}{r} \aaa_\alpha \lrcorner ({\bf{Hr}})^\alpha \bar{\psi}_\mu + \frac{1}{r} \bbb_\nu \lrcorner {\Psi^\nu}\!_\mu - \frac{||\nabla r||^2}{r^2} (2 \bar{\psi}_\mu - \bar{h}_{\mu\nu}\bar{\psi}^\nu)\\
 &=& \frac{1}{r} \rho_\mu - \left(\frac{\Delta r}{r} + \frac{||\nabla r||^2}{r^2}\right) \bar{\psi}_\mu\\
 & = & \left( \frac{1}{r^2} - \frac{\Delta r}{r} - \frac{||\nabla r||^2}{r^2}\right) \bar{\psi}_\mu.
 \end{eqnarray*}
 
 To finish our computations of the Ricci forms we need to consider the metric on $B$ in more detail.  Thus we set
 \[
 \mathbf{g} = - e^2 dt^2 + g^2 dr^2,
 \]
where we assume are that $e$ and $g$ are smooth and take on only positive values.  Clearly, 
\[
\theta^0 = e dt, \theta_0 = - e dt \ {\rm and \ } \theta^1 = \theta_1 = g dr.
\]

 From the assumptions just made we see that 
 $dr = \frac{1}{g} \theta^1$  which immediately implies that $r_0 = 0$ and $r_1 = \frac{1}{g}$. Thus 
 \[
 ||\nabla r||^2 = \frac{1}{g^2}.
 \]
 In addition,
 \[
 ({\bf{H}}r)_0  = - \frac{1}{g}{\theta^1}\!_0  \quad {\rm and} \quad ({\bf{H}}r)_1 = d\left(\frac{1}{g}\right) = -\frac{1}{g^2} dg.
 \]

We compute ${\theta^0}\!_1$.  Since ${\theta^0}\!_1 = {\theta^1}\!_0$, and
\begin{eqnarray*}
d \theta^0 & = & d(e dt) = e_r dr \wedge dt = -\frac{e_r}{eg} \theta^0 \wedge \theta^1,\\
d \theta^1 & = & d(g dr) = g_t dt \wedge dr = - \frac{g_t}{eg} \theta^1 \wedge \theta^0
\end{eqnarray*}
we see that
\[
{\theta^0}\!_1 = \frac{e_r}{eg} \theta^0 + \frac{g_t}{eg} \theta^1.
\]
Noting that
\begin{eqnarray*}
dg &  = & g_t dt + g_r dr = \frac{g_t}{e} \theta^0 + \frac{g_r}{g} \theta^1\\
de & = & e_t dt + e_r dr = \frac{e_t}{e} \theta^0+ \frac{e_r}{g} \theta^1
\end{eqnarray*}
we obtain
\[
 g_0 = \frac{g_t}{e}, g_1 = \frac{g_r}{g}, e_0 = \frac{e_t}{e} {\rm \ and \ } e_1 = \frac{e_r}{g}.
\]
Hence
\[
{\theta^0}\!_1 = \frac{e_1}{e} \theta^0 + \frac{g_0}{g} \theta^1 = (\ln e)_1\theta^0 + (\ln g)_0 \theta^1.
\]

As a consequence of what we just observed we have the following:
\[
({\bf{H}}r)_0 = -\frac{1}{g} \left( (\ln e)_1\theta^0 + (\ln g)_0 \theta^1 \right) \ {\rm and\ } ({\bf{H}}r)_1 = -\frac{1}{g}\left( (\ln g)_0 \theta^0 + (\ln g)_1 \theta^1\right)
\]
and, moreover,
\[
\Delta r = -({\bf{H}}r)_{00} +({\bf{H}}r)_{11} = \frac{1}{g} \left[ (\ln e)_1 - (\ln g)_1\right] = \frac{1}{g} \left(\ln \frac{e}{g}\right)_1.
\]
In terms of coordinates
\[
\Delta r = \frac{1}{g^2} \left(\ln \frac{e}{g}\right)_r.
\]

Since $\rho_\alpha = K_B \theta_\alpha$, where $K_B$ is the sectional curvature of $B$, we now compute $K_B$.  We do this by computing ${\Theta^0}\!_1$.

\begin{eqnarray*}
{\bar{\Theta}^0}\!_1 &=& d{\bar{\theta}^0}\!_1 = d\left((\ln e)_1 \theta^0 + (\ln g)_0 \theta^1\right)\\
& = & (\ln e)_{11} \theta^1 \wedge \theta^0 -(\ln e)_1 {\theta^0}\!_1 \wedge \theta^1 +(\ln g)_{00} \theta^0 \wedge \theta^1- (\ln g)_0 {\theta^1}\!_0 \wedge \theta^0\\
& = & \left(-(\ln e)_{11}  - (\ln e)_1\:^2  + (\ln g)_{00} + (\ln g)_0\,^2\right)\theta^0 \wedge \theta^1\\
&= & \left(-\frac{e_{11}}{e} +\frac{g_{00}}{g}\right) \theta^0 \wedge \theta_1.
\end{eqnarray*}
Thus
\[
K_B = -\frac{e_{11}}{e} +\frac{g_{00}}{g} = -\frac{1}{eg} \left[ \left(\frac{e_r}{g}\right)_r - \left(\frac{g_t}{e}\right)_t\right].
\]

Substituting formulas for $\rho_\alpha =K_B\theta_\alpha, (Hr)_\alpha, \Delta r$ and $||\nabla r||^2$ into the formulas for the Ricci forms we get the following:
\begin{eqnarray*}
\bar{\rho}_0 &=&  \left(-K_B + \frac{2}{rg} (\ln e)_1\right)  \bar{\theta}^0 + \frac{2}{rg}(\ln g)_0\bar{\theta}^1\\
\bar{\rho}_1 
& = & \frac{2}{rg}(\ln g)_0 \bar{\theta}^0 + \left(K_B + \frac{2}{rg} (\ln g)_1\right) \bar{\theta}^1\\
\bar{\rho}_\mu  &=& \left(\frac{1}{r^2}\left(1- \frac{1}{g^2}\right) -\frac{1}{rg}\left(\ln \frac{e}{g}\right)_1\right) \bar{\psi}_\mu
\end{eqnarray*}

\section{Birkhoff's Theorem}

	Since we have the machinery to prove Birkhoff's Theorem, we will.  As is well known, if ${\bf{T}}=0$, then the Ricci tensor $\bar{\bf{R}}$ on $X$ is trivial. This implies that the Ricci forms must be trivial as well.  This gives us the following four equations:

\begin{eqnarray}
&K_B  =  \frac{2}{rg}(\ln e)_1 \label{eqn:1}\\
&K_B  =  -\frac{2}{rg}(\ln g)_1 \label{eqn:2}\\
&\frac{2}{rg} (\ln g)_0 = 0 \label{eqn:3}\\
& \frac{1}{r^2}\left(1 - \frac{1}{g^2}\right) - \frac{1}{rg}\left(\ln \frac{e}{g}\right)_1 = 0 \label{eqn:4}
\end{eqnarray}

From equation (\ref{eqn:3}) we immediately see that $g= g(r)$, that is, $g$ is a function of $r$ only.  Equations (\ref{eqn:1}) and (\ref{eqn:2}) imply that
$ (\ln eg)_1 = 0$.  Thus, $eg(r) = f(t)$ and we conclude that $e = \frac{f(t)}{g(r)}$, where $f(t) > 0$, for all $t$.  Thus we introduce  a new time variable $\tau$ where $d\tau = f(t) dt$ and the metric on $B$ takes the form
\[
 - \frac{1}{g^2} d\tau^2 + g^2 dr^2.
 \]
Noting that $ e = \frac{1}{g}$ in the new $\tau, r$ coordinates and introducing the function $h = e^2$, equation (\ref{eqn:4}) becomes
\[
\frac{1}{r^2}(1-h) -\frac{h}{r} ( \ln h)_ r = 0,
\]
that is,
\begin{equation}
r h_r + h = 1. \label{eqn:5}
\end{equation}
This is equivalent to $(rh)_r = 1$ and leads immediately to 
\[
h = 1 + \frac{c}{r},
\]
where $c$ is the constant of integration.  

Note however that we have not yet established that equations (\ref{eqn:1}) and (\ref{eqn:2}) are satisfied by our solution
\[
\bar{\mathbf{g}} = -h d\tau^2 + h^{-1} dr^2 + r^2 d\sigma^2,
\]
where we have introduced $d\sigma^2$ for the metric on $\mathbb{S}^2$.

  Since we know that the difference of these two equations is satisfied, it will suffice to show that equation (\ref{eqn:1}) is.  First, we express $K_B$ in terms of $h$.
\[
K_B =  -\left(\frac{e_r}{g}\right)_r = -\frac{1}{2} h_{rr}.
\]
Since
\[
\frac{2}{rg}(\ln e)_1 = \frac{1}{rg^2}(\ln e^2)_r = \frac{1}{r} h_r.
\]
equation (\ref{eqn:1}) may be written
\[
r h_{rr} + 2 h_r = 0.
\]
This follows immediately from equation (\ref{eqn:5}).

Thus under the assumptions made we find the metric has the following representation:
\[
\bar{\mathbf{g}} = -\frac{r +c}{r} d\tau^2 + \frac{r}{r+c} dr^2 + r^2 d\sigma^2.
\]
This is clearly a static metric since $\frac{\partial}{\partial \tau}$ is an irrotational Killing vector field.  If $r_1 = \infty$ then the metric is also  asymptotically flat.  (We will precisely define what we mean by ``asymptotically flat" in Section 6.)  We also note that such a spherically symmetric space-time must have trivial Ricci curvature and hence trivial stress-energy tensor.

If $c >0$  then the metric $\bar{\mathbf{g}}$ is defined for all real $r$ and becomes singular as $r$ approaches $0$.  If $c=0$, $X$ is flat; in fact, it is open submanifold of Minkowski space-time.  If $c < 0$, then one customarily sets $c = -2M$ (where $M >0$) and notes that the metric is defined only for $r > 2M$. In this last case we have obtained the Schwarzchild metric.

We close with a statement of what we have proved.

	{\bf Birkhoff's Theorem}  {\em If a space-time has spherical spatial symmetry and  vanishing Ricci curvature, then is must be static and (assuming $r_1 = \infty$) it must also be asymptotically flat.  If the metic  is singular as $r \to r_0 > 0$ (and $r_1 = \infty$) then the space-time metric is the Schwarzchild metric.}

	\section{Generalizing Birkhoff}
	
	Let $T_SX$ be the vector bundle of 2-dimensional vector spaces tangent to the spatial symmetry spheres and $T_RX$ be the vector bundle that is the orthogonal complement of $T_SX$.  We let $\bar{\bf{g}}_S$ and $\bf{T}_S$ denote the restrictions of $\bar{\bf{g}}$ and $\bf{T}$ to $T_SX$; similarly, let $\bar{\bf{g}}_R$ and $\bf{T}_R$ denote the restriction of $\bar{\bf{g}}$ and $\bf{T}$ to $T_RX$.      Because of the symmetry assumption, there exists a function $\lambda:X \to \mathbb{R}$ (that is the pullback of a function on $B$) such that $\bf{T}_S = \lambda \bar{\bf{g}}_S$.  As an operator this means that $\bf{T}^\#$ (the linear transformation associated to $\bf{T}$ by means of the metric)  has any fiber of  $T_SX$ as an invariant subspace.  Thus, the fibers of  $T_RX$, being the orthogonal complements of the fibers of $T_SM$, must be invariant subspaces of $\bf{T}^\#$ as well.  We will assume that $\bf{T}_R = -\mu \bar{\bf{g}}_R$, where $\mu: X \to \mathbb{R}$ (which also must be the pullback of a function defined on B.)  As a consequence of that assumption, 
\[
\bf{T} = -\mu \bar{\bf{g}}_R + \lambda \bar{\bf{g}}_S.
\]	
This section is devoted to studying space-times which have spherical spatial symmetry and stress-energy tensors of this sort.  The discussion given below mimics the proof of Birkhoff's Theorem.

A straightforward calculation, using Einstein's equation,  implies
	\[
	\bar{\bf{R}} = -8\pi \lambda \bar{\bf{g}}_R  + 8\pi \mu \bar{\bf{g}}_S.
	\]
	This last equation is equivalent to  $\bar{\rho}_\alpha = - 8 \pi \lambda\, \bar{\theta}_\alpha$ and $\bar{\rho}_\mu = 8\pi \mu\, \bar{\psi}_\mu$.  Hence
	\begin{eqnarray} 
	K_B -\frac{2}{rg}(\ln e)_1 = -8\pi\lambda  \qquad \label{eqn:e1}\\
	K_B + \frac{2}{rg} (\ln g)_1 = -8\pi\lambda  \qquad \label{eqn:e2}\\
	\frac{2}{rg}(\ln g)_0 =0  \qquad \quad  \label{eqn:e3}\quad\\
	 \frac{1}{r^2}\left(1 - \frac{1}{g^2}\right) - \frac{1}{rg}\left(\ln \frac{e}{g}\right)_1 = 8\pi\mu \label{eqn:e4}
	\end{eqnarray}

	Half the sum and difference of equations (\ref{eqn:e1}) and (\ref{eqn:e2}) will be more useful. They follow.
	\begin{eqnarray}
	\frac{1}{rg} \left(\ln eg\right)_1 = 0 \qquad  \label{eqn:diff}\\
	\quad K_B +\frac{1}{rg}\left(\ln \frac{g}{e}\right)_1 = - 8 \pi \lambda\   \label{eqn:sum}
	\end{eqnarray}
	
	We begin by introducing the function $h: X \to \mathbb{R}$, where  $h = \frac{1}{g^2}$.   We first observe that equation (\ref{eqn:e3}) is equivalent to asserting $h = h(r)$.  In light of that fact, equation (\ref{eqn:diff}) is equivalent to the following: there  exists a time coordinate $\tau$ so that relative to the $\tau, r$ coordinate system $e^2 = h$ and $g^2 = h^{-1}$, that is, there exists coordinates $\tau, r$ such that 
	\begin{equation}
	\bar{\bf{g}}=-h(r) d\tau^2 + h(r)^{-1} dr^2 + r^2 d\sigma^2. \label{eqn:static}
	\end{equation}	
	Clearly the metric in equation (\ref{eqn:static}) is that of a static space-time.  In fact, it is the form of all the metrics mentioned in the introduction.  
	
	Equation (\ref{eqn:e4}) written in terms of $h$ becomes
\[
\frac{1}{r^2} (1 -h) - \frac{1}{r} h_r = 8\pi\mu,
\]
which can be rewritten as
\begin{equation}
(rh)_r = 1- 8\pi r^2 \mu.  \label{eqn:rh}
\end{equation}
Given that $h = h(r) > 0$, the preceding equation is equivalent to stating that $\mu = \mu(r)$, there exist $c$ and $r_* \in (r_0, r_1)$ such that
\begin{equation}
h(r) = \frac{r+c}{r} - \frac{2}{r} \int_{r_*}^r 4\pi x^2 \mu(x)dx  \label{eqn:h}
\end{equation}
and
\begin{equation}
\int_{r_*}^r 4\pi x^2 \mu(x)dx < \frac{r + c}{2}, \quad {\rm for \ } r \in (r_0,r_1).  \label{eqn:lessthan}
\end{equation}
	
	We only need to consider equation (\ref{eqn:sum}).  It clearly corresponds to a condition on $\mu$ and $\lambda$.  In terms of $h$, equation (\ref{eqn:sum}) becomes
\[
-\frac{1}{2} h_{rr} -\frac{1}{r} h_r  = -8\pi\lambda,
\]
which can be rewritten

\[
(rh)_{rr}  = 16 \pi r \lambda.
\]
Because of equation (\ref{eqn:rh}), this last equation is equivalent to 
\begin{equation}
\lambda = -\mu - \frac{r}{2} \mu_r. \label{eqn:tov}
\end{equation}

As a consequence of the preceding discussion we have our first generalization of Birkhoff's Theorem.

\begin{thm}
 If  $X$ is a spherically symmetric space-time with metric $\bar{\bf{g}}$ and stress-energy tensor   $\bf{T} = -\mu \bar{\bf{g}}_R + \lambda \bar{\bf{g}}_S$, then  $\mu = \mu(r)$, there exist $c$ and $r_* \in (r_0, r_1)$ such that  $\mu$ satisfies inequality {\rm (\ref{eqn:lessthan})}, equation {\rm (\ref{eqn:tov})} holds and there exists a time coordinate $\tau$ so that
  the metric $\bar{\mathbf{g}}$ is given by equation {\rm (\ref{eqn:static})} with $h$ defined by {\rm (\ref{eqn:h})}.  Conversely, if $X$ is a space-time with metric ${\rm \bar{\bf{g}}}$ given by equation {\rm (\ref{eqn:static})} where $h$ is defined by equation {\rm (\ref{eqn:h})}, with $\mu = \mu(r)$ necessarily satisfying inequality {\rm (\ref{eqn:lessthan})} and $\lambda$ is given by equation {\rm (\ref{eqn:tov})}, then $X$ with metric $\bar{\bf{g}}$ is spherically symmetric and has stress-energy tensor   $\bf{T} = -\mu \bar{\bf{g}}_R + \lambda \bar{\bf{g}}_S$.

   \end{thm}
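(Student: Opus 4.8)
The plan is to exploit the fact that, for the assumed product metric $\bar{\mathbf{g}} = -e^2\,dt^2 + g^2\,dr^2 + r^2\,d\sigma^2$, the Ricci forms have already been computed at the end of Section 2, so the whole problem collapses to an analysis of the system (\ref{eqn:e1})--(\ref{eqn:e4}). First I would feed the hypothesis $\mathbf{T} = -\mu\bar{\mathbf{g}}_R + \lambda\bar{\mathbf{g}}_S$ through Einstein's equation, taking a trace to eliminate the scalar-curvature term, to obtain $\bar{\mathbf{R}} = -8\pi\lambda\,\bar{\mathbf{g}}_R + 8\pi\mu\,\bar{\mathbf{g}}_S$; this is equivalent to $\bar\rho_\alpha = -8\pi\lambda\,\bar\theta_\alpha$ and $\bar\rho_\mu = 8\pi\mu\,\bar\psi_\mu$, and matching against the formulas for $\bar\rho_0,\bar\rho_1,\bar\rho_\mu$ gives precisely (\ref{eqn:e1})--(\ref{eqn:e4}). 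I would then replace this system by the equivalent one consisting of (\ref{eqn:e3}), the difference equation (\ref{eqn:diff}), the sum equation (\ref{eqn:sum}), and (\ref{eqn:e4}).

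The forward direction is then a chain of implications. Equation (\ref{eqn:e3}) forces $(\ln g)_0 = 0$, so $g$, hence $h = g^{-2}$, depends on $r$ only; given that, (\ref{eqn:diff}) says $(\ln eg)_1 = 0$, so $eg = f(t)$ with $f>0$, and the change of time variable $d\tau = f(t)\,dt$ puts the metric into the form (\ref{eqn:static}) with $e^2 = h$, $g^2 = h^{-1}$. With $h = h(r)$, equation (\ref{eqn:e4}) rearranges to $(rh)_r = 1 - 8\pi r^2\mu$; since the left side depends on $r$ alone, so does $\mu$, and integrating from any chosen $r_*\in(r_0,r_1)$ yields (\ref{eqn:h}) with $c$ the constant of integration. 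The requirement that $\bar{\mathbf{g}}$ have the correct Lorentzian signature, i.e. $h>0$ on $(r_0,r_1)$, is exactly inequality (\ref{eqn:lessthan}). Finally, rewriting (\ref{eqn:sum}) in terms of $h$ gives $(rh)_{rr} = 16\pi r\lambda$; differentiating the identity $(rh)_r = 1-8\pi r^2\mu$ and substituting produces the generalized TOV relation $\lambda = -\mu - \frac{r}{2}\mu_r$, which is (\ref{eqn:tov}).

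For the converse I would run the computation backwards. Starting from the metric (\ref{eqn:static}) with $h$ given by (\ref{eqn:h}), inequality (\ref{eqn:lessthan}) guarantees $h>0$, so $\bar{\mathbf{g}}$ is genuinely Lorentzian of the warped-product type considered, and it is manifestly spherically symmetric since $SO(3)$ acts by isometries on the $\mathbb{S}^2$ factor. Because $h=h(r)$, equations (\ref{eqn:e3}) and (\ref{eqn:diff}) hold automatically; differentiating (\ref{eqn:h}) gives $(rh)_r = 1-8\pi r^2\mu$, which is (\ref{eqn:e4}); and (\ref{eqn:tov}) together with this identity yields $(rh)_{rr} = 16\pi r\lambda$, which is (\ref{eqn:sum}). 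Their sum and difference then recover (\ref{eqn:e1}) and (\ref{eqn:e2}), so all four Ricci-form equations hold, giving $\bar{\mathbf{R}} = -8\pi\lambda\,\bar{\mathbf{g}}_R + 8\pi\mu\,\bar{\mathbf{g}}_S$; inverting Einstein's equation returns $\mathbf{T} = -\mu\bar{\mathbf{g}}_R + \lambda\bar{\mathbf{g}}_S$.

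The step I expect to be the main nuisance is the same gap the author flags in the Birkhoff section: one must check that the solution obtained by integrating the \emph{combinations} (\ref{eqn:diff}), (\ref{eqn:sum}), (\ref{eqn:e4}) genuinely satisfies the \emph{original} pair (\ref{eqn:e1}), (\ref{eqn:e2}), and not merely their sum and difference, so that the equivalence is an honest biconditional; this is handled by observing that (\ref{eqn:e1}) and (\ref{eqn:e2}) are recovered algebraically from (\ref{eqn:diff}) and (\ref{eqn:sum}). A secondary point deserving care is the legitimacy of the reparametrization $d\tau = f(t)\,dt$ and the bookkeeping that changing $r_*$ only shifts the constant $c$, so that ``there exist $c$ and $r_*$'' is the correct level of generality in the statement.
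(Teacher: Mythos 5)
Your proposal is correct and follows essentially the same route as the paper: Einstein's equation yields the system (\ref{eqn:e1})--(\ref{eqn:e4}), which is replaced by the equivalent system with the half-sum (\ref{eqn:sum}) and half-difference (\ref{eqn:diff}), and each equation is then translated into a statement about $h$, $\mu$, and $\lambda$ exactly as in Section 4. Your closing worry about recovering (\ref{eqn:e1}) and (\ref{eqn:e2}) is moot here (unlike in the Birkhoff section) precisely because both the sum and the difference are retained, as you yourself note.
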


Obviously, a metric $\bar{\bf{g}}$ given by equation (\ref{eqn:static}) where $h$ satisfies equation (\ref{eqn:h})  is static and   any $\lambda$  given by  equation (\ref{eqn:tov}) satisfies $\lambda =  \lambda(r)$ when $\mu = \mu(r)$.  We see we have such metrics for every $\mu = 
\mu(r)$ which satisfies (\ref{eqn:lessthan}).  We therefore consider some examples.

\begin{exam}\end{exam} 
\noindent
 Suppose $\bf{T}$ is a multiple of the metric, that is, $\lambda = -\mu$ throughout $X$.  It follows from Theorem 1 that $\mu$ and $\lambda$ are constants.  If we suppose that $r_0 = 0$, $c=0$ and $r_* = 0$ (which does not present any problems given the form of the integrand)   in order  that the metric extends to the world line of the center of symmetry, then inequality (\ref{eqn:lessthan}) implies $r_1 < \sqrt{\frac{3}{8\pi\mu}}$.  We get a de-Sitter metric if $\mu > 0$, a Minkowski space if $\mu = 0$ and an anti-de Sitter space if $\mu <0$.
\begin{exam}\end{exam} 
\noindent
Suppose $r_0 = 0, r_1 = \infty$ and  assume $\rho$ and $b$ are positive constants.  Set
\[ 
\mu = \rho \ {\rm for\ } 0 < r < b < \sqrt{\frac{3}{8 \pi \rho}} \quad {\rm and} \quad \mu = 0 \ {\rm for \ } r \geq b.
\]
We choose $c = r_* = 0$, and then  note $r_1 = \infty$ is allowed in this situation since inequality (\ref{eqn:lessthan}) is satisfied for all $r > 0$.  In addition, by equation (\ref{eqn:tov})
\[
\lambda = -\mu \ {\rm for\ } r \neq b \quad {\rm and } \quad \lambda(b) =  \frac{b \rho}{2} \delta_{r-b}.
\]
Introducing $m(r) = \frac{4}{3} \pi r^3 \rho$, which can be regarded as some sort of mass of the body bounded by the 2-sphere of curvature $\frac{1}{r^2}$ with density $\rho$, then
\[
h(r) = 1 - \frac{2 m(r)}{r} \ {\rm for\ } r< b \quad {\rm and } \quad h(r) = 1 - \frac{2m(b)}{r} \ {\rm for \ } r \geq b.
\]
This metric is the static approximation\cite{cw2018} of the simplest GEneric Object of Dark Energy (GEODE).

\begin{exam}\end{exam} 
\noindent
Suppose  $\mu$ is constant and  $r_0 = r_* > 0$.  Let $c - \frac{8 \pi \mu}{3} r_0^3 = R_S$, which we regard as the Schwarzchild radius, and $\Lambda = 8\pi \mu$, which we regard as the cosmological constant.  We get the Kottler-Weyl-Trefftz metric for which
\[
h(r) = 1- \frac{R_S}{r} - \frac{\Lambda r^2}{3}.
\]

\section{Further Generalization}

We have been using special frames throughout this paper and at this point we want to give those frames a name.  Recall that $r$ and the direction of $\frac{\partial}{\partial t}$ have geometric meaning and thus physical meaning.
\begin{defin}  We call an orthonormal frame a Birkhoff frame if the time-like vector of the frame is a multiple of $\frac{\partial}{\partial t}$ and the first spacial vector of the frame is a multiple of $\frac{\partial}{\partial r}$.  The coframe dual to this frame is referred to as a Birkhoff coframe.
\end{defin}

For a spherically symmetric space-time, we have shown that
\[
{\bf{T}} =  -\mu_0 \bar{\theta}_0 \bar{\theta}^0 + \kappa (\bar{\theta}_1\bar{\theta}^0 - \bar{\theta}_0 \bar{\theta}^1) + \mu_1 \bar{\theta}_1 \bar{\theta}^1 + \lambda \bar{\psi}_\nu \bar{\psi}^\nu
\]
with respect to a Birkhoff coframe.  The components $\mu_\alpha, \lambda$ and $\kappa$ are pullbacks of functions on $B$, i.e., they depend only on $t$ and $r$.  We will understand that unless specifically restricted to be functions  of $r$ these components are to be regarded as functions of $t$ and $r.$

It follows from Einstein's equation that  $\bar{\bf{R}}$ has the same eigenvectors with respect to $\bar{\bf{g}}$ as $\bf{T}$ does.  By examining the Ricci forms of a spacially symmetric space-time we see that  $\bar{\bf{R}}(\bar{\aaa}_0. \bar{\aaa}_1) = \frac{2}{rg}(\ln g)_0$, where  $\bar{\aaa}_0,  \bar{\aaa}_1$ are the first two vectors of a Birkhoff frame.    If the metric on $X$ is spherically symmetric and static, then $\bar{\bf{R}}(\bar{\aaa}_0. \bar{\aaa}_1) = 0$ and consequently $\bar{\aaa}_0$ and $ \bar{\aaa}_1$ are eigenvectors of $\bar{\bf{R}}$ with respect to $\bar{\bf{g}}$. Hence, $\bar{\aaa}_0$ and $ \bar{\aaa}_1$ are eigenvectors of $\bf{T}$ with respect to $\bar{\bf{g}}$ as well.    Obviously, the functions $\mu_\alpha$ and $\lambda$ depend only on $r$ and $\kappa = 0$.  The following has been shown.  \\

\noindent{\bf{Observation.}}
\emph{If $X$ is a static spherically symmetric space-time, then ${\bf{T}}$ is diagonalized by a Birkhoff frame  with components $\mu_\alpha = \mu_\alpha(r)$ and $\lambda = \lambda(r)$, i.e., the stress-energy tensor }
\begin{equation}
{\bf{T}} = -\mu_0(r) \bar{\theta}_0 \bar{\theta}^0 + \mu_1(r) \bar{\theta}_1 \bar{\theta}^1 + \lambda(r) \bar{\psi}_\nu \bar{\psi}^\nu.
\label{eqn:T}
\end{equation}

Based on the Observation, a reasonable starting point for any further generalizations would require that $\bf{T}$ satisfy equation (\ref{eqn:T}).  However, are we going to demand anything of the $\mu_\alpha$ and $\lambda$?  We will only impose a condition on $\mu_1$.  In fact, our starting point for the derivation of our main theorem is the imposition of the following conditions:
\begin{eqnarray*}
&(i)&\quad X \ {\rm is \ a\ spherically\ symmetric \ space\!-\!time}.\\
&(ii)&\quad \bf{T} {\rm \ has\ the\ form\ given \ in \ equation\  (\ref{eqn:T}}).\\
&(iii)&\quad  \mu_1 = \mu_1(r).
\end{eqnarray*}

Because of $(i)$ and $(ii)$, Einstein's equation yields the following:
\begin{eqnarray}
&\frac{1}{rg}\left(\ln eg\right)_1  = 4\pi(\mu_0 + \mu_1)  \label{eqn:E1}\\
&K_B + \frac{1}{rg} \left(\ln \frac{g}{e}\right)_1 = -8\pi\lambda \label{eqn:E2}\\
&\frac{2}{rg} (\ln g)_0 = 0    \label{eqn:E3}\\
& \frac{1}{r^2} \left(1 - \frac{1}{g^2}\right) - \frac{1}{rg}\left(\ln \frac{e}{g}\right)_1 = 4\pi(\mu_0 - \mu_1) \label{eqn:E4}
\end{eqnarray}

As before, we let $h = \frac{1}{g^2}$ but also introduce $p = eg$.  (The function $p$ stands for product not pressure.)
Just as we noticed in earlier sections, equation (\ref{eqn:E3}) is equivalent to asserting $h = h(r)$.  Rewriting equation (\ref{eqn:E1}) in terms of $h$ and $p$ gives
\begin{equation}
h \left(\ln p\right)_r =4\pi r(\mu_0 + \mu_1). \label{eqn:p}
\end{equation}
Moreover, rewriting  (\ref{eqn:E4}) using $h$ and $p$, we get the following:
\begin{eqnarray*}
\frac{1}{r^2} (1-h) - \frac{h}{r} (\ln hp)_r =  4\pi (\mu_0 - \mu_1)\\
\frac{1}{r^2}(1-h)  - \frac{1}{r}{h_r} - \frac{h}{r}(\ln p)_r =4\pi (\mu_0 - \mu_1)
\end{eqnarray*}
Using equation (\ref{eqn:p}) to  substitute into the third term, this last equation becomes after some simplification
\begin{equation}
(r h)_r = 1 - 8\pi r^2 \mu_0. \label{eqn:rhagain}
\end{equation}
The preceding statement is equivalent to asserting that $\mu_0 =  \mu_0(r)$, there exists $c$ and $r_* \in (r_0,r_1)$ such that
\begin{equation}
h(r) = \frac{r+c}{r} - \frac{2}{r}\int_{r_*}^r 4\pi x^2 \mu_0(x)dx. \label{eqn:hagain}
\end{equation}
and
\begin{equation}
\int_{r_*}^r 4\pi x^2 \mu_0(x)dx < \frac{r + c}{2}, \quad {\rm for \ } r \in (r_0,r_1).  \label{eqn:lessthanagain}
\end{equation}
Taking into account $(iii)$, we can now solve equation (\ref{eqn:p}) for $p$.  We find that there exists a function of $t$, $f(t) >0$, so that
\begin{equation}
p = f(t) \exp\left[\int_{r_*}^r \frac{4\pi x^2 (\mu_0(x) + \mu_1(x))}{x+ c - 2 \int_{r_*}^x 4\pi y^2 \mu_0(y)dy} dx\right]. \label{eqn:pagain}
\end{equation}
We introduce a new time variable $\tau$ so that $f(t) dt = d\tau$ and note that in $\tau, r$ coordinates $p = p(r)$.  
The function $p$ is then given by equation (\ref{eqn:pagain}) with $f(t)$ removed.    The metric has the following form:
\begin{equation}
\bar{\bf{g}} = -p^2 h d\tau^2 + h^{-1}dr^2 + r^2 d\sigma^2.  \label{eqn:gagain}
\end{equation}

What we have observed so far is that, because of $(iii)$, equations (\ref{eqn:E1}),  (\ref{eqn:E3}) and (\ref{eqn:E4}) are equivalent to asserting the following:  $\mu_0 = \mu_0(r)$, which is subject to equation (\ref{eqn:lessthanagain}), and there exists coordinates $\tau,r$ so that $\bar{\mathbf{g}}$ is given by equation (\ref{eqn:gagain}) with $h$ and $p$ given by equations (\ref{eqn:hagain}) and (\ref{eqn:pagain}), with $f(t)$ removed, respectively.

We still have to deal with  equation  (\ref{eqn:E2}); this should correspond to a  condition on $\mu_0, \mu_1$ and $\lambda$.    Written in terms of $h$ and $p$, equation (\ref{eqn:E2}) becomes
\[
\frac{1}{2p} \left( \frac{(p^2 h)_r}{p}\right)_r + \frac{1}{rp}\left(ph\right)_r = 8 \pi \lambda.
\]
We note this implies that $\lambda = \lambda(r)$ and proceed with a calculation. After carrying out the indicated differentiations, we get
\[
(\ln p)^2_r h + (\ln p)_{rr} h + (\ln  p)_r h_r + \frac{1}{2}(\ln p)_r h_r + \frac{1}{2} h_{rr} + \frac{1}{r}h_r + \frac{1}{r} (\ln p)_r h = 8 \pi \lambda.
\]
Regrouping, we can write
\[
(rh)_{rr} + 2r[(\ln p)_r h]_r + (\ln p)_r h + (\ln p)_r [ 2r(\ln p)_r h + (rh_r)_r] = 16\pi r \lambda.
\]
Using equations (\ref{eqn:p}) and (\ref{eqn:rhagain}), we get
\begin{equation}
\lambda = \frac{1}{4}(3\mu_1 -  \mu_0) + \frac{r}{2}(\mu_1)_r + \frac{(\mu_0 + \mu_1)(1 + 8\pi r^2 \mu_1)}{4h}. \label{eqn:TOV}
\end{equation}
(It easily follows that if $\mu_1 = -\mu_0$ then this equation reduces to equation (\ref{eqn:tov}).)

The forgoing has proved the following generalization of Birkhoff's theorem.
\begin{thm}If $X$ is a spherically symmetric space-time with 
\[
{\bf{T}} = -\mu_0 \bar{\theta}_0 \bar{\theta}^0 + \mu_1 \bar{\theta}_1 \bar{\theta}^1 + \lambda \bar{\psi}_\nu \bar{\psi}^\nu
\] 
and $\mu_1 = \mu_1(r)$, then  $\mu_0 =\mu_0(r)$ and there exist $c$ and $r_* \in (r_0,r_1)$ such that $\mu_0$ satisfies inequality {\rm (\ref{eqn:lessthanagain})}, equation {\rm (\ref{eqn:TOV})} holds and there exists a time coordinate $\tau$ such that the metric $\bar{\bf{g}}$ is given by equation {\rm (\ref{eqn:gagain})}, where $h$ is given by equation {\rm (\ref{eqn:hagain})} and
\begin{equation}
p =\exp\left[\int_{r_*}^r \frac{4\pi x^2 (\mu_0(x) + \mu_1(x))}{x+ c - 2 \int_{r_*}^x 4\pi y^2 \mu_0(y)dy} dx\right]. \label{eqn:P}
\end{equation}
Conversely,  if $X$ is a space-time with metric $\bar{\bf{g}}$ given by equation {\rm (\ref{eqn:gagain})} where $h$ is given by {\rm (\ref{eqn:hagain})} and $p$ is given by equation {\rm (\ref{eqn:P})}, with $\mu_\alpha = \mu_\alpha(r)$, and $\mu_0$ necessarily satisfying equation {\rm (\ref{eqn:lessthanagain})}, and $\lambda$ given by equation {\rm (\ref{eqn:TOV})}, then $X$ is a spherically symmetric space-time and has stress-energy tensor 
\[
{\bf{T}} = -\mu_0 \bar{\theta}_0 \bar{\theta}^0 + \mu_1 \bar{\theta}_1 \bar{\theta}^1 + \lambda \bar{\psi}_\nu \bar{\psi}^\nu.
\] 

\end{thm}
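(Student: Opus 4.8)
The plan is to assemble the running computation that precedes the statement into the two implications of the biconditional; most of it is bookkeeping, so I describe the route and flag the one genuinely computational step.

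\textbf{Necessity.} Assuming $(i)$--$(iii)$, I would substitute the stress-energy tensor (\ref{eqn:T}) into Einstein's equation and match the Ricci forms $\bar{\rho}_0,\bar{\rho}_1,\bar{\rho}_\mu$ computed in Section 2 against what that equation demands for a $\mathbf T$ of the form (\ref{eqn:T}); this produces the scalar system (\ref{eqn:E1})--(\ref{eqn:E4}). Setting $h=1/g^2$ and $p=eg$, equation (\ref{eqn:E3}) gives $h=h(r)$ immediately. I would then rewrite (\ref{eqn:E1}) as (\ref{eqn:p}), rewrite (\ref{eqn:E4}) in terms of $h$ and $p$, and use (\ref{eqn:p}) to eliminate $(\ln p)_r$; after cancellation this reduces to $(rh)_r = 1 - 8\pi r^2\mu_0$, namely (\ref{eqn:rhagain}). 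Since the left side is a function of $r$ alone, so is $\mu_0$; integrating from a base point $r_*$ yields (\ref{eqn:hagain}), the constant of integration being recorded as $c$, and the requirement $h>0$ on $(r_0,r_1)$ is precisely the inequality (\ref{eqn:lessthanagain}). Now I would invoke $(iii)$: with $\mu_0$ and $\mu_1$ both functions of $r$, equation (\ref{eqn:p}) is a first-order linear ODE for $\ln p$ in the variable $r$ whose only freedom is a $t$-dependent constant, so $p$ equals a positive function $f(t)$ times an explicit function of $r$, which is (\ref{eqn:pagain}); passing to a new time $\tau$ with $d\tau = f(t)\,dt$ absorbs $f(t)$ and brings $\bar{\mathbf g}$ to the form (\ref{eqn:gagain}) with $p$ as in (\ref{eqn:P}). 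At this stage (\ref{eqn:E1}), (\ref{eqn:E3}), (\ref{eqn:E4}) have been used up and only (\ref{eqn:E2}) remains; I would rewrite it in terms of $h$ and $p$, carry out the two differentiations, regroup so that the combinations $(rh)_r$ and $h(\ln p)_r$ appear, and substitute (\ref{eqn:rhagain}) and (\ref{eqn:p}), which collapses it to (\ref{eqn:TOV}).

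\textbf{The main obstacle} is exactly this last reduction of (\ref{eqn:E2}) to the generalized TOV equation (\ref{eqn:TOV}): one must keep track of $(\ln p)_r^2\,h$, of $(\ln p)_{rr}h$ and $(\ln p)_r h_r$, and of the sectional-curvature term in $K_B$, and check that after the two substitutions they recombine cleanly into $\frac{1}{4}(3\mu_1-\mu_0) + \frac{r}{2}(\mu_1)_r + \frac{(\mu_0+\mu_1)(1+8\pi r^2\mu_1)}{4h}$ with no residual term. As a sanity check one can confirm that setting $\mu_1=-\mu_0$ recovers (\ref{eqn:tov}) of Section 4.

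\textbf{Sufficiency.} For the converse I would run the chain backwards. A metric of the form (\ref{eqn:gagain}) is a warped product $B\times_r\mathbf S^2$, hence spherically symmetric, and (\ref{eqn:lessthanagain}) forces $h>0$, so the signature is correct. Writing $g=h^{-1/2}$, $e=p\,h^{1/2}$ with $h$ and $p$ functions of $r$ makes (\ref{eqn:E3}) automatic; differentiating (\ref{eqn:hagain}) gives back (\ref{eqn:rhagain}), which upon reinserting $p$ via (\ref{eqn:p}) is (\ref{eqn:E4}); and (\ref{eqn:P}) is, by construction, the $r$-integral of (\ref{eqn:p}), i.e. (\ref{eqn:E1}). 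Finally the reduction described above, read in reverse, shows that the left side of (\ref{eqn:E2}) expressed through $h$ and $p$ equals the right side of (\ref{eqn:TOV}), so (\ref{eqn:E2}) holds with $\lambda$ as given, and this $\lambda$ is a function of $r$ since $\mu_0,\mu_1,h$ are. Hence the Ricci forms of $\bar{\mathbf g}$ are exactly those forced by Einstein's equation for a stress-energy tensor of the stated form, and running Einstein's equation back recovers $\mathbf T = -\mu_0\bar{\theta}_0\bar{\theta}^0 + \mu_1\bar{\theta}_1\bar{\theta}^1 + \lambda\bar{\psi}_\nu\bar{\psi}^\nu$. The one thing to verify here is that every link in the necessity argument is genuinely an equivalence, so that the converse imposes no constraint on $\mu_0,\mu_1$ beyond (\ref{eqn:lessthanagain}).
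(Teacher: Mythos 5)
Your proposal is correct and follows essentially the same route as the paper: the paper's proof of Theorem~2 is precisely the running derivation in Section~5 that turns (\ref{eqn:E1})--(\ref{eqn:E4}) into the chain $h=h(r)$, (\ref{eqn:p}), (\ref{eqn:rhagain}), (\ref{eqn:hagain})/(\ref{eqn:lessthanagain}), (\ref{eqn:pagain}) with the time reparametrization, and finally the reduction of (\ref{eqn:E2}) to (\ref{eqn:TOV}), with each step an equivalence so that the converse comes for free. You have correctly identified the only genuinely computational step (the collapse of (\ref{eqn:E2}) to the generalized TOV equation) and the $\mu_1=-\mu_0$ sanity check, both of which match the paper.
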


It is clear from Observation that the metrics described in Theorem 2 account for all static spherically symmetric space-times.
We now consider some examples.

\begin{exam}\end{exam} The purpose of this example is to present a non-trivial closed form solution satisfying the conditions given in Theorem 2. Suppose there exists constants $m_0$ and $m_1$ such that 
\(
\mu_\alpha = \frac{m_\alpha}{r^2}, \ {\rm for\ } r_0 < r < r_1.
\)
Because of the form of the integrands with which we are dealing, we may take $r_* = r_0$.  The calculations are straightforward; we get the following, where $C = c + 8\pi m_0 r_0$.
\[ 
h(r) = 1 - 8\pi m_0 +\frac{C}{r} \quad {\rm and} \quad p = \left(\frac{h(r)}{h(r_0)}\right)^{\frac{4\pi(m_0 + m_1)}{1 - 8\pi m_0}}
\]
so that
\[
\bar{\bf{g}} =-h(r_0)^{\frac{8\pi(m_0 + m_1)}{8\pi m_0-1}}
 h(r)^{\frac{1 - 8\pi m_1}{1-8 \pi m_0 }}d\tau^2 + \frac{1}{h(r)} dr^2 + r^2 d\sigma^2
\]
and 
\[
\lambda = \frac{(m_0 + m_1)[8\pi(m_0+ m_1) r - C]}{4[(1 - 8\pi m_0)r + C]}\frac{1}{r^2}.
\]
For physical reasons we want $m_0 \geq 0$.  To get a Lorentzian metric, with $\frac{\partial}{\partial \tau}$ time-like, one of the following will do:
\begin{eqnarray*}
&{\rm If \ } m_0 < \frac{1}{8\pi}\ {\rm and \ } C \geq 0, \ {\rm then\  } r_0 \ {\and \ } r_1 \ {\rm are \ unrestricted}.\\
&{\rm If\ } m_0 < \frac{1}{8\pi} \ {\rm and\ } C < 0, \ {\rm then\ } r_0 \geq \frac{C}{8\pi m_0 -1}.\\
&{\rm If \ } m_0 > \frac{1}{8\pi}, \ {\rm then \ } C >0 \ {\rm and \ } r_1 \leq  \frac{C}{8\pi m_0 -1}.
\end{eqnarray*}

An interesting case occurs if $m_0 + m_1 =0$. Then 
\[
\bar{\bf{g}} =  - h(r) d\tau^2 + h(r)^{-1}dr^2 + r^2 d\sigma^2\quad {\rm and }\quad \lambda = 0.
\]
This is a collection of metrics which includes the Schwartzchild metric, which occurs when $m_0 = 0$, that have no transversal pressure with respect to the Birkhoff frame.

\begin{exam}\end{exam} We suppose all of the following: $r_0 = c = r_* = 0, r_1 = \infty$ and $\rho$ is a non-negative valued function on $(0,\infty)$.  We define $m(r)$ and place an additional restriction on $\rho$ by asserting
\[
m(r) = \int_0^r 4\pi x^2 \rho(x) dx < \frac{r}{2}, \ {\rm for\  all\ } r >0.
\]
Moreover,  suppose that $\bf{T}$ is the stress-energy tensor of a perfect fluid and set $\rho = \mu_0 \geq 0$ and $ P = \mu_1 = \lambda$.  Substituting into equation (\ref{eqn:TOV}) yields the TOV equation
\[
P_r   = -(P + \rho)\frac{m(r) + 4\pi r^3P}{r[r-2m(r)]}.
\]
In order for the metric to exist at $r = 0$ it is necessary and sufficient that
\[
\left |\int_0^r x(\rho(x)  + P(x)) dx\right | < \infty, \ {\rm for\ any\ } r >0.
\]

\noindent
{\bf Remark.}  If we write $h(r) = 1 - \frac{2m(r)}{r}$ so that $m(r) = c + \int_{r_*}^r 4\pi x^2 \mu_0(x) dx$, then equation (\ref{eqn:TOV}) can be written as follows:
\[
\lambda - \mu_1 = \frac{r}{2} \left[(\mu_1)_r + \frac{(\mu_0 + \mu_1) (m(r) + 4\pi r^3 \mu_1)}{r[r- 2m(r)]}\right].
\]
We notice its close relation to TOV; hence we call it or equation (\ref{eqn:TOV}) the  generalized TOV equation.  

We end this section by applying Theorem 2 to the study of stress-energy tensors $\bf{T}$ which are ``close to $\bf{0}$."
What we mean by that will be made clear in what is to follow.

\begin{defin}  Let $\bf{B}: V \times V \to \mathbb{R}$ be a symmetric bilinear form on the vector space $V$.  The null space of $\bf{B}$, $N(\bf{B})$, is defined as follows:
\[ 
N(\bf{B}) = \{\bf{w} \in V: \bf{B}(\bf{w}, \bf{v}) = 0, \ {\rm for \ all \ } \bf{v} \in V\}.
\]
\end{defin}

The next lemma is a straightforward result from linear algebra.
\begin{lem}
Suppose  $V$ is an $n$-dimensional vector space with inner product.  If $\dim (N({\bf{B}})) = n-1$, then the 1-dimensional subspace of $V$ orthogonal to $N({\bf{B}})$ is the eigenspace of $\bf{B}$ with respect to that metric associated to the nonzero eigenvalue of $\bf{B}$ with respect to that metric.

\end{lem}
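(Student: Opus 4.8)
The plan is to pass from the bilinear form to the operator $\mathbf{B}^{\#}$ associated to $\mathbf{B}$ by the inner product $\mathbf{g}$, characterized by $\mathbf{g}(\mathbf{B}^{\#}v,w)=\mathbf{B}(v,w)$ for all $v,w\in V$ — this is the same $\#$-construction already used for $\mathbf{T}$ — and then to identify the line orthogonal to $N(\mathbf{B})$ with an eigenspace of $\mathbf{B}^{\#}$. First I would record two elementary facts. (1) $N(\mathbf{B})=\ker\mathbf{B}^{\#}$: a vector $w$ lies in $N(\mathbf{B})$ iff $\mathbf{g}(\mathbf{B}^{\#}w,v)=0$ for all $v$, which by nondegeneracy of $\mathbf{g}$ is equivalent to $\mathbf{B}^{\#}w=0$; hence $\dim\ker\mathbf{B}^{\#}=n-1$, so $\mathbf{B}^{\#}$ has rank $1$ and in particular $\mathbf{B}^{\#}\neq 0$. (2) $\mathbf{B}^{\#}$ is self-adjoint with respect to $\mathbf{g}$, since $\mathbf{g}(\mathbf{B}^{\#}v,w)=\mathbf{B}(v,w)=\mathbf{B}(w,v)=\mathbf{g}(\mathbf{B}^{\#}w,v)=\mathbf{g}(v,\mathbf{B}^{\#}w)$, using symmetry of $\mathbf{B}$ and of $\mathbf{g}$.

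Next I would use self-adjointness to get $(\operatorname{im}\mathbf{B}^{\#})^{\perp}=\ker\mathbf{B}^{\#}$: indeed $w\perp\operatorname{im}\mathbf{B}^{\#}$ iff $\mathbf{g}(w,\mathbf{B}^{\#}v)=\mathbf{g}(\mathbf{B}^{\#}w,v)=0$ for all $v$, i.e. $\mathbf{B}^{\#}w=0$. Taking orthogonal complements and using the double-complement identity for a nondegenerate form on a finite-dimensional space gives $\operatorname{im}\mathbf{B}^{\#}=(\ker\mathbf{B}^{\#})^{\perp}=N(\mathbf{B})^{\perp}=:L$, which is exactly the $1$-dimensional subspace in the statement. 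Since the image of any operator is invariant under it, $L$ is $\mathbf{B}^{\#}$-invariant; being one-dimensional, $L$ is therefore an eigenspace, so $\mathbf{B}^{\#}v=\lambda v$ for all $v\in L$ and some scalar $\lambda$. It remains to see $\lambda\neq 0$: if $\lambda=0$ then $L\subseteq\ker\mathbf{B}^{\#}=N(\mathbf{B})$, contradicting $L=N(\mathbf{B})^{\perp}$ together with $L\cap N(\mathbf{B})=0$. Finally $\mathbf{B}^{\#}$ vanishes on $N(\mathbf{B})=\ker\mathbf{B}^{\#}$, so on $V=L\oplus N(\mathbf{B})$ it is diagonal with eigenvalue $\lambda$ of multiplicity $1$ and eigenvalue $0$ of multiplicity $n-1$; thus $\lambda$ is the unique nonzero eigenvalue and $L$ is precisely its eigenspace.

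The computations here are routine; the one place that genuinely needs care is the pair of facts $L\cap N(\mathbf{B})=0$ and $\lambda\neq 0$. Both are immediate when $\mathbf{g}$ is a (positive- or negative-) definite inner product, since then $U\cap U^{\perp}=0$ for every subspace $U$. If $\mathbf{g}$ is only assumed nondegenerate the conclusion can fail: a self-adjoint operator of rank $1$ can be nilpotent when $N(\mathbf{B})$ is a degenerate hyperplane, in which case $L=N(\mathbf{B})^{\perp}\subseteq N(\mathbf{B})$ and $\mathbf{B}^{\#}$ has no nonzero eigenvalue. So I expect the \emph{only} real issue to be recording that the ``inner product'' is definite (or, failing that, adding the hypothesis that $N(\mathbf{B})$, equivalently the orthogonal line $L$, is nondegenerate); everything else is bookkeeping with $\#$, kernels, images, and orthogonal complements.
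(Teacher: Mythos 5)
Your argument is correct, but there is nothing in the paper to compare it against: the lemma is dispatched there with the single remark that it is ``a straightforward result from linear algebra,'' and no proof is supplied. Your route through the associated operator $\mathbf{B}^{\#}$ --- identifying $N(\mathbf{B})$ with $\ker \mathbf{B}^{\#}$, using self-adjointness to obtain $\operatorname{im}\mathbf{B}^{\#} = (\ker\mathbf{B}^{\#})^{\perp}$, and reading off the rank-one spectral decomposition $V = L \oplus N(\mathbf{B})$ --- is the natural argument and presumably the intended one.

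The caveat you raise at the end is the only substantive point, and it is worth taking seriously because of how the lemma is used. The paper declares that it follows O'Neill's terminology, under which ``inner product'' means positive definite; read that way the lemma is true as stated and your proof closes it. But the lemma is then invoked in Corollary 1 on a tangent space of $X$ equipped with the Lorentzian metric $\bar{\mathbf{g}}$, which is merely nondegenerate, and there, as you observe, a rank-one self-adjoint operator whose kernel is a degenerate hyperplane is nilpotent: $L = N(\mathbf{B})^{\perp} \subseteq N(\mathbf{B})$ and there is no nonzero eigenvalue. The application is nevertheless sound, because the hypothesis of Corollary 1 forces $N(\mathbf{T}_x)$ (when it is a hyperplane) to be the span of the three spacelike Birkhoff vectors, a nondegenerate hyperplane whose orthogonal complement is the timelike line; your argument then goes through verbatim once ``definite'' is weakened to ``$N(\mathbf{B})$ is a nondegenerate subspace.'' Stating that hypothesis explicitly, as you propose, is the right fix and costs nothing in the application.
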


\begin{cor}Let $X$ be a spherically symmetric space-time such that $N({\bf{T}}_x)$ contains the spacial vectors of the Birkhoff frame, for every $x \in X$, then  $\bf{T} = 0$
\end{cor}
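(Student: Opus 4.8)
The plan is to express the hypothesis in a Birkhoff coframe, use it to annihilate every component of ${\bf{T}}$ except the energy density, and then invoke Theorem 2 to force that component to vanish as well.

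First I would write, in a Birkhoff frame $\bar{\aaa}_0,\bar{\aaa}_1,\bar{\bbb}_2,\bar{\bbb}_3$ (time-like vector a multiple of $\partial/\partial t$, first spatial vector a multiple of $\partial/\partial r$), the general form
\[
{\bf{T}} =  -\mu_0 \bar{\theta}_0 \bar{\theta}^0 + \kappa (\bar{\theta}_1\bar{\theta}^0 - \bar{\theta}_0 \bar{\theta}^1) + \mu_1 \bar{\theta}_1 \bar{\theta}^1 + \lambda \bar{\psi}_\nu \bar{\psi}^\nu
\]
already established for spherically symmetric space-times. The hypothesis says the spatial vectors $\bar{\aaa}_1,\bar{\bbb}_2,\bar{\bbb}_3$ lie in $N({\bf{T}}_x)$ for every $x\in X$, so in particular ${\bf{T}}(\bar{\bbb}_\mu,\bar{\bbb}_\nu)=0$, ${\bf{T}}(\bar{\aaa}_1,\bar{\aaa}_1)=0$, and ${\bf{T}}(\bar{\aaa}_1,\bar{\aaa}_0)=0$. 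Evaluating the displayed tensor on these pairs (using $\bar{\theta}_1=\bar{\theta}^1$, $\bar{\theta}_0=-\bar{\theta}^0$, $\bar{\psi}_\mu=\bar{\psi}^\mu$, and noting that the remaining pairings vanish automatically) gives $\lambda\equiv 0$, $\mu_1\equiv 0$, and $\kappa\equiv 0$ as functions on $X$. In particular ${\bf{T}} = -\mu_0\bar{\theta}_0\bar{\theta}^0 + \mu_1\bar{\theta}_1\bar{\theta}^1 + \lambda\bar{\psi}_\nu\bar{\psi}^\nu$ with $\mu_1=\mu_1(r)$ (it is identically zero), so $X$ satisfies conditions $(i)$--$(iii)$ and Theorem 2 applies.

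From Theorem 2 we obtain $\mu_0=\mu_0(r)$, the metric in the form (\ref{eqn:gagain}) with $h$ given by (\ref{eqn:hagain}), equivalently equation (\ref{eqn:rhagain}) stating $(rh)_r = 1-8\pi r^2\mu_0$, and the generalized TOV equation (\ref{eqn:TOV}). Setting $\mu_1=0$ and $\lambda=0$ in (\ref{eqn:TOV}) and clearing the denominator (recall $h=g^{-2}>0$) leaves $\mu_0(1-h)=0$. The only step that is more than a substitution is to deduce $\mu_0\equiv 0$ from this: if $\mu_0(\rho)\neq 0$ for some $\rho\in(r_0,r_1)$, then by continuity $\mu_0\neq 0$ on an open interval $I\ni\rho$, so $h\equiv 1$ on $I$; but then $(rh)_r=1$ on $I$, and equation (\ref{eqn:rhagain}) forces $\mu_0\equiv 0$ on $I$, contradicting $\mu_0(\rho)\neq 0$. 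Hence $\mu_0\equiv 0$, and therefore $\mu_0=\mu_1=\lambda=\kappa=0$, i.e.\ ${\bf{T}}=0$.

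The computations are all routine; the points to watch are the frame bookkeeping in the second paragraph — in particular that ${\bf{T}}(\bar{\aaa}_1,\bar{\aaa}_0)$ really does recover $\kappa$ — and the small continuity argument in the last paragraph, which carries the genuine content once the stress-energy tensor has been reduced to a pure energy-density term.
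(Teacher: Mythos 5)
Your proposal is correct and follows essentially the same route as the paper: reduce ${\bf{T}}$ to the single term $-\mu_0\bar{\theta}_0\bar{\theta}^0$, invoke Theorem 2 so that the generalized TOV equation with $\mu_1=\lambda=0$ yields $\mu_0(1-h)=0$, and then rule out $h\equiv 1$ on an interval where $\mu_0\neq 0$. The only cosmetic differences are that the paper cites Lemma 1 where you evaluate components directly, and it derives the contradiction from the integral formula for $h$ rather than from $(rh)_r=1-8\pi r^2\mu_0$; these are equivalent.
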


\begin{proof}
Suppose the hypothesis are true.  From the definition of the Birkhoff frame and Lemma 1, we see that
\[
\bf{T} = - \mu_0 \bar{\theta}_0 \bar{\theta}^0.  
\]
All of the conditions $(i), (ii)$ and $(iii)$ are satisfied since $\mu_1$ is a constant function. Hence, $\mu_0 = \mu_0(r)$ and (\ref{eqn:TOV}) holds.  Substituting $0$ for $\mu_1$ and $\lambda$ in equation (\ref{eqn:TOV}),  we get
\[
-\frac{\mu_0}{4}+ \frac{\mu_0}{4h} = 0.
\]
Thus, on any open interval where $\mu_0 \neq 0$, we must have $h = 1$; that is
\[
c = 2 \int_{r_*}^r 4\pi x^2 \mu_0(x) dx.
\]
This is only possible if $\mu_0 = 0$.
\end{proof}

As a consequence of this corollary we have the following.
\begin{cor}
Birkhoff's Theorem is still true for a spherically symmetric space-time $X$ if we replace the condition that the space-time is a vacuum by the condition that $N({\bf{T}}_x)$ contains the spacial vectors of the Birkhoff frame, for every $x \in X$.
\end{cor}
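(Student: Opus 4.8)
The plan is to obtain Corollary 2 as an immediate consequence of Corollary 1 together with the form of Birkhoff's Theorem already established at the end of Section 3. Observe first that the hypothesis of Corollary 2 — that $N(\mathbf{T}_x)$ contains the two spacial vectors of the Birkhoff frame at every $x \in X$ — is verbatim the hypothesis of Corollary 1. So the first step is simply to invoke Corollary 1 and conclude $\mathbf{T} = 0$ on all of $X$; i.e., the space-time is a vacuum.

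The second step is to feed this into Birkhoff's Theorem as stated and proved in Section 3. By Einstein's equation, $\mathbf{T} = 0$ forces the Ricci curvature $\bar{\mathbf{R}}$ to vanish identically, which is exactly the hypothesis of that theorem. Hence $X$ is static; moreover, if $r_1 = \infty$ the metric is asymptotically flat, and if in addition it is singular as $r \to r_0 > 0$ then it is the Schwarzschild metric. This is precisely the assertion that Birkhoff's Theorem remains valid with the vacuum condition replaced by the null-space condition on $\mathbf{T}$.

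Since the genuine work was already carried out in Corollary 1 — via Theorem 2 and the generalized TOV relation \eqref{eqn:TOV}, which under $\mu_1 = \lambda = 0$ collapses to $-\frac{\mu_0}{4} + \frac{\mu_0}{4h} = 0$ and forces $\mu_0 = 0$ — there is no substantive obstacle left. The only points meriting a line of care are that the reduction to $\mathbf{T} = -\mu_0 \bar{\theta}_0 \bar{\theta}^0$ comes from applying Lemma 1 fiberwise, so one genuinely needs the null-space condition to hold at \emph{every} event (as hypothesized), and that the constant function $\mu_1 \equiv 0$ does satisfy condition $(iii)$, so Theorem 2 legitimately applies. With those remarks in place the proof is a two-step citation: Corollary 1 to reach a vacuum, then the Section 3 Birkhoff Theorem to finish.
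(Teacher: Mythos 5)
Your proposal is correct and follows exactly the route the paper intends: the paper presents this corollary as an immediate consequence of Corollary 1 (the hypothesis forces $\mathbf{T}=0$, hence a vacuum), after which the Section 3 Birkhoff Theorem applies verbatim. Your additional remarks on why Lemma 1 and condition $(iii)$ legitimize the application of Theorem 2 inside Corollary 1 are accurate and consistent with the paper's own argument.
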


We have yet one more corollary, but first a definition.
\begin{defin}
Let $X$ be a space-time.  We say that $X$ is dust, if $N({\bf{T}}_x)$ is the orthogonal complement of a time-like vector, for all $x \in X$.
\end{defin}

An immediate consequence of the Observation and Corollary 1 is the well-known result.

\begin{cor}
Suppose $X$ is a spherically symmetric space time.  If $X$ is dust, then $X$ cannot be static.
\end{cor}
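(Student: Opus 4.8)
The plan is to combine the \textbf{Observation} with \textbf{Corollary 1} directly. Suppose, for contradiction, that $X$ is a spherically symmetric dust space-time that is static. Since $X$ is static and spherically symmetric, the \textbf{Observation} applies: the stress-energy tensor $\bf{T}$ is diagonalized by a Birkhoff frame, so with respect to the Birkhoff coframe we have
\[
{\bf{T}} = -\mu_0(r) \bar{\theta}_0 \bar{\theta}^0 + \mu_1(r) \bar{\theta}_1 \bar{\theta}^1 + \lambda(r) \bar{\psi}_\nu \bar{\psi}^\nu,
\]
with all components functions of $r$ alone. In particular, the spacial vectors $\bar{\aaa}_1, \bar{\bbb}_2, \bar{\bbb}_3$ of the Birkhoff frame are eigenvectors of $\bf{T}^\#$ with eigenvalues $\mu_1(r), \lambda(r), \lambda(r)$, respectively.

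Next I would invoke the dust hypothesis. By Definition 4, $N({\bf{T}}_x)$ is the orthogonal complement of a time-like vector for every $x \in X$; that is, $N({\bf{T}}_x)$ is a $3$-dimensional spacelike subspace of $T_xX$. Since $N({\bf{T}}_x)$ annihilates $\bf{T}$, every vector in it is an eigenvector of $\bf{T}^\#$ with eigenvalue $0$. The key step is to argue that this forces the spacial vectors $\bar{\aaa}_1, \bar{\bbb}_2, \bar{\bbb}_3$ of the Birkhoff frame to lie in $N({\bf{T}}_x)$: the time-like vector $\bar{\aaa}_0$ spans the orthogonal complement of the $3$-dimensional spacelike $N({\bf{T}}_x)$ (since the orthogonal complement of a $3$-dimensional spacelike subspace in a Lorentzian $4$-space is a $1$-dimensional time-like line), and $\bar{\aaa}_1, \bar{\bbb}_2, \bar{\bbb}_3$ are all orthogonal to $\bar{\aaa}_0$, hence they span $N({\bf{T}}_x)$. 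Thus $N({\bf{T}}_x)$ contains the spacial vectors of the Birkhoff frame at every $x$, which is exactly the hypothesis of \textbf{Corollary 1}.

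Applying \textbf{Corollary 1}, we conclude ${\bf{T}} = 0$. But a dust space-time has $N({\bf{T}}_x)$ equal to a $3$-dimensional (proper) subspace, whereas ${\bf{T}} = 0$ gives $N({\bf{T}}_x) = T_xX$, which is $4$-dimensional — a contradiction. Hence no spherically symmetric dust space-time can be static.

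I do not expect any serious obstacle here; the only point requiring a small amount of care is the linear-algebra step identifying $N({\bf{T}}_x)$ with the span of the Birkhoff-frame spacial vectors, which uses that $\bar{\aaa}_0$ is time-like and $N({\bf{T}}_x)$ is spacelike of codimension one, so orthogonal complements line up as claimed. Everything else is immediate from the \textbf{Observation} and \textbf{Corollary 1}. (One could alternatively phrase this without contradiction: dust plus static $\Rightarrow$ hypothesis of Corollary 1 $\Rightarrow {\bf{T}}=0$, but then $X$ is a vacuum, not dust unless one regards the trivial tensor as a degenerate case — the contradiction formulation is cleanest.)
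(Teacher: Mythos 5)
Your proof is correct and takes essentially the same route as the paper, which dispenses with this corollary by calling it ``an immediate consequence of the Observation and Corollary 1''; you have simply filled in the details of that one-line argument. The only spot worth tightening is the claim that $\bar{\aaa}_0$ spans $N({\bf T}_x)^{\perp}$: the cleanest justification is that the Observation makes ${\bf T}$ diagonal in the Birkhoff frame, so $N({\bf T}_x)$ is the span of exactly those frame vectors whose eigenvalue vanishes, and a $3$-dimensional \emph{spacelike} null space then forces these to be $\bar{\aaa}_1, \bar{\bbb}_2, \bar{\bbb}_3$, after which Corollary~1 and the dimension count give the contradiction just as you say.
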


\section{Asymptotics}

For the space-times we are considering where the coordinate $r$ has geometric meaning, we have the following definition.\\

\noindent
{\bf Definition of Asymptotically Flat.}  {\em A space-time $X$ which is spherically symmetric is said to be asymptotically flat if the metric $\bar{\bf{g}}$ approaches a limiting value and  the Riemann curvature tensor of $\bar{\bf{g}}$ approaches {\rm 0} as $r$ approaches $r_1$.}

These conditions are certainly met by the metrics that arise in Birkhoff's Theorem.

Let's elaborate on the conditions specified in the definition.  That $\bar{\bf{g}}$ has a limit as $r \to r_1$ means that $\lim_{r \to r_1} e^2$ and $\lim_{r \to r_1} g^2$ are positive reals.  To say that the Riemann curvature tensor approaches 0 means that all of its components with respect to some orthonormal moving frame approach 0;  thus the Riemann curvature tensor approaches 0 as $r \to r_1$  if and only if
\[
{\bar{\Theta}^0}\!_1 \to 0, \quad {\bar{\Psi}}^2\!_3 \to 0, \quad {\bar{\Omega}^\alpha}\!_\nu \to 0 \quad{\rm as}\quad r \to r_1.
\]
From the formulas for these curvature forms we see that these last limits take the following form:
\[
K_B \to 0, \  \frac{1}{r^2}\left(1 - \frac{1}{g^2}\right) \to 0, \  \frac{1}{rg^2}(\ln e)_r \to 0, \  \frac{1}{rg^2}(\ln g)_r \to 0 \ {\rm as}\ r \to r_1.
\]

We will now characterize those static spherically symmetric space-times   that are asymptotically flat.  However, we will do so under the physically reasonable assumption that $\mu_0 \geq 0$.
\begin{thm}
Assume $X$ is a static space-time with spherical spatial symmetry and 
\[
{\bf{T}} = -\mu_0 \bar{\theta}_0 \bar{\theta}^0 + \mu_1 \bar{\theta}_1 \bar{\theta}^1 + \lambda \bar{\psi}_\nu \bar{\psi}^\nu,
\]
where $\mu_0 \geq 0$.  Then $X$ is asymptotically flat if and only if one of the following occurs:
\begin{eqnarray*} 
&{\rm (i)}&\  r_1 < \infty, \ h(r) = 1 +\frac{2}{r}\int_r^{r_1} 4 \pi x^2 \mu_0(x) dx,\ {\rm and}\\
 \qquad  & \qquad&\qquad\qquad\lim_{r \to r_1} \mu_0 = \lim_{r \to r_1} \mu_1 = \lim_{r \to r_1} (\mu_1)_r =0.\\
&{\rm (ii)}&\  r_1  = \infty,  \lim_{r \to \infty} r^2 \mu_0(r) = l, \ {\rm where\ }  0\leq l < \frac{1}{8\pi}, \ {\rm and}\\
\qquad & \qquad& \qquad \qquad   \int_{r_*}^\infty x (\mu_0(x) + \mu_1(x)) dx \  {\rm exists} .
\end{eqnarray*}
\end{thm}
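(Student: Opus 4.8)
The plan is to run the computation of Section~5 in reverse. By the Observation and Theorem~2, any static spherically symmetric space-time carrying the given $\mathbf{T}$ is, after the change of time coordinate to $\tau$, one of the metrics (\ref{eqn:gagain}) with $h$ given by (\ref{eqn:hagain}) and $p$ by (\ref{eqn:P}), where $h,p,\mu_0,\mu_1,\lambda$ all depend on $r$ only; so the whole question is one about the pair $(h,p)$, equivalently about $(\mu_0,\mu_1)$. First I would make ``asymptotically flat'' explicit: writing $g^2=h^{-1}$ and $e^2=p^2h$, the definition demands that $e^2$ and $g^2$ tend to positive reals as $r\to r_1$, together with
\[
K_B\to 0,\qquad \frac1{r^2}\Bigl(1-\frac1{g^2}\Bigr)\to 0,\qquad \frac1{rg^2}(\ln e)_r\to 0,\qquad \frac1{rg^2}(\ln g)_r\to 0 .
\]

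The second step is to rewrite these four limits with the structural identities of Section~5. Equation (\ref{eqn:rhagain}) gives $h_r=\frac1r(1-h-8\pi r^2\mu_0)$, whence $\frac1{rg^2}(\ln g)_r=-\frac{h_r}{2r}=\frac{h-1}{2r^2}+4\pi\mu_0$; combined with (\ref{eqn:p}) this gives $\frac1{rg^2}(\ln e)_r=\frac{1-h}{2r^2}+4\pi\mu_1$; and (\ref{eqn:E2}) reads $K_B=-8\pi\lambda-\frac1{rg^2}(\ln g)_r+\frac1{rg^2}(\ln e)_r$. Since $\frac1{r^2}(1-h)$ is automatically $o(1)$ as soon as $h$ has a finite limit, the conclusion is that asymptotic flatness is equivalent to the conjunction: $h\to h_\infty$ for some $h_\infty>0$, $p^2\to$ a positive real, $\mu_0\to0$, $\mu_1\to0$, and $\lambda\to0$; moreover, when $r_1<\infty$ the vanishing of $\frac1{r^2}(1-h)$ forces $h_\infty=1$, while when $r_1=\infty$ that condition is vacuous.

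Now I would separate the two cases. If $r_1<\infty$ then $h_\infty=1$, and feeding this into (\ref{eqn:hagain}) pins down $c=2\int_{r_*}^{r_1}4\pi x^2\mu_0\,dx$, hence $h(r)=1+\frac2r\int_r^{r_1}4\pi x^2\mu_0\,dx$; the requirements $\mu_0\to0$ and $\mu_1\to0$ are kept as is; $p^2\to$ positive real is then free, since on the bounded interval $[r_*,r_1]$ the integrand $\frac{4\pi x(\mu_0+\mu_1)}{h}$ defining $\ln p$ is bounded; and substituting $\mu_0,\mu_1\to0$, $h\to1$ into (\ref{eqn:TOV}) turns ``$\lambda\to0$'' into ``$(\mu_1)_r\to0$''. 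This is exactly part (i). If $r_1=\infty$ I would bring in the hypothesis $\mu_0\ge0$: then $M(r):=\int_{r_*}^r4\pi x^2\mu_0\,dx$ is nondecreasing, $h=1+\frac cr-\frac{2M(r)}{r}$, so $h\to h_\infty$ is the same as $M(r)/r\to\frac{1-h_\infty}{2}$; a Stolz--Ces\`aro/monotonicity argument converts this into $r^2\mu_0\to l$ with $l=\frac{1-h_\infty}{8\pi}\in[0,\frac1{8\pi})$, and conversely $r^2\mu_0\to l$ yields $h\to 1-8\pi l>0$ by a Ces\`aro mean. Finally, $h$ being continuous, positive, and having a positive limit, it is bounded above and bounded away from $0$ on $[r_*,\infty)$; hence ``$p^2\to$ positive real'', i.e.\ convergence of $\ln p=\int_{r_*}^r\frac{4\pi x(\mu_0+\mu_1)}{h}\,dx$, is equivalent to convergence of $\int_{r_*}^\infty x(\mu_0+\mu_1)\,dx$, and one checks that $\mu_1\to0$ and $\lambda\to0$ then follow from those two conditions via (\ref{eqn:p}) and (\ref{eqn:TOV}). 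This is part (ii).

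The step I expect to be the real work is the $r_1=\infty$ analysis of the improper integrals: ``$h$ has a positive limit'' is only a Ces\`aro statement about $\int 4\pi x^2\mu_0$, so upgrading it to the genuine limit $\lim_{r\to\infty}r^2\mu_0=l$, together with transferring the (possibly only conditional) convergence of $\int_{r_*}^\infty x(\mu_0+\mu_1)\,dx$ across division by the non-monotone factor $1/h$, is where care is needed; the sign hypothesis $\mu_0\ge0$, which makes $M$ monotone, is precisely what renders the first of these manageable. By contrast the $r_1<\infty$ case and the algebraic bookkeeping with the Section~5 identities are routine.
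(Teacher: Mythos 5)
Your proposal is correct and follows essentially the same route as the paper: reduce via the Observation and Theorem~2 to the metrics of Section~5, translate the curvature-decay conditions through equations (\ref{eqn:E1})--(\ref{eqn:E4}) into limits on $h$, $\mu_0$, $\mu_1$, $\lambda$, split on $r_1<\infty$ versus $r_1=\infty$, pin down $c$ (resp.\ the limit $l$) from (\ref{eqn:hagain}), extract the integral condition from convergence of $\ln p$, and convert $\lambda\to 0$ into $(\mu_1)_r\to 0$ via the generalized TOV equation. Even the delicate point you flag --- upgrading the Ces\`aro-type statement about $\frac{1}{r}\int_{r_*}^r 4\pi x^2\mu_0\,dx$ to $\lim_{r\to\infty}r^2\mu_0=l$ using monotonicity from $\mu_0\ge 0$ --- is exactly the step the paper handles with its bounded/unbounded dichotomy and l'H\^opital.
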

 \begin{proof}
First, let's write the four limits given above in terms of $h$, $\mu$ andf $\lambda$.  We can do this by taking advantage of Einstein's equation, more explicitly, equations (\ref{eqn:E1}), (\ref{eqn:E2}) and (\ref{eqn:E4}).  We get that the above limits hold, and thus the Riemann curvature tensor approaches 0 as $r \to r_1$, if and only if the following do:
\begin{equation}
\lim_{r \to r_1} \frac{1-h}{r^2} =0, \  \lim_{r \to r_1} \mu_0 = 0, \  \lim_{r \to r_1} \mu_1 =  0,\ \lim_{r \to r_1} \lambda = 0  \label{eqn:lim1}
\end{equation}
Second, as a consequence of equation (\ref{eqn:TOV}), 
\begin{equation}
\lim_{r \to r_1} {\bf{T}} = 0  \Leftrightarrow \lim_{r \to r_1}\mu_\alpha = \lim_{r \to r_1}\lambda =0 \Leftrightarrow \lim_{r \to r_1} \mu_\alpha = \lim_{r \to r_1} (\mu_1)_r = 0. \label{eqn:lim2}
\end{equation}

Assume $X$ is asymptotically flat.  Suppose $r_1 < \infty$.  Since the limit of the metric exists,  $\lim_{r \to r_1} h$ is positive real. 
As a consequence of the first limit appearing in (\ref{eqn:lim1}),  $r_1 < \infty$ implies $\lim_{r\to r_1} h(r) = 1$.  Using equation (\ref{eqn:hagain}), this implies 
\[
c = 2 \int_{r_*}^{r_1} 4\pi x^2 \mu_0(x) dx.
\]
Consequently
\[
h(r) = 1 + \frac{2}{r}\int_r^{r_1} 4\pi x^2 \mu_0(x) dx.
\]
Since the Riemann curvature tensor approaches 0 as $r \to r_1$, it follows, from (\ref{eqn:lim1}) and (\ref{eqn:lim2}), that the remainder to the conditions in (i) hold.

Now suppose that $r_1 = \infty$. 
Thus, it must be true that $\lim_{r \to \infty} h  =n$, where $n >0$.
According to  equation (\ref{eqn:hagain}) this implies
\[
1- 8\pi \lim_{r \to\infty} \frac{1}{r} \int_{r_*}^r x^2 \mu_0(x) dx = n.
\]
If we let 
\[
l = \lim_{r \to \infty} \frac{1}{r}\int_{t_*}^r x^2 \mu_0(x) dx,
\]
then $0 \leq l < \frac{1}{8\pi}$, since we have assumed that $\mu_0 \geq 0$.
The  integral appearing in the above limit is an nondecreasing function of $r$.  Thus  the integral is bounded for all $r$ or approaches $\infty$ as $r \to \infty$.  If it is bounded for all $r$ then  $\lim_{r \to \infty} r^2 \mu_0(r) = 0$.
If the integral approaches $\infty$ as $r \to \infty$, then we apply l'Hospital's Rule to the limit and obtain
$\lim_{r \to  \infty} r^2\mu_0(r) = l$.  Either way, $\lim_{r \to \infty} r^2 \mu_0(r) = l$, where $0 \leq l < \frac{1}{8\pi}$.

It must also be true that $\lim_{r \to \infty} p^2 h$ is a positive  real. Since the $\lim_{r \to \infty} h$ exists and $p$ is an exponential 
\[
\lim_{r \to \infty} \ln p = \lim_{r \to \infty} \int_{r_*}^r \frac{4\pi x(\mu_0(x) + \mu_1(x))}{h} dx \ {\rm exists}.
\]
Again, since $\lim_{r \to \infty} h = n$, $\int_{r_*}^\infty x(\mu_0(x) + \mu_1(x)) dx$ must exist. 

Now assume that (i) holds. Clearly $\lim_{r \to r_1}h = 1$.  Thus the first limit in (\ref{eqn:lim1}) holds.  By (\ref{eqn:lim2}) the remainder of the limits in (\ref{eqn:lim1}) holds.  Since the limits of $h, \mu_0$ and $\mu_1$ all exist as $r \to r_1$,  the integral in the definition of $p$ is a proper integral even when the upper limit is $r_1$. Hence, $\lim_{r \to r_1} p^2 h$ exists.  Indeed, $X$ is asymptotically flat.

Finally, assume what is asserted in (ii) is true.  It is not hard to show that $\lim_{r \to \infty} h(r)$ is a positive real from the assertion that $\lim_{r \to \infty} r^2 \mu_0(r) = l$, where $0 \leq l < \frac{1}{8 \pi}$.  It also follows from this limit that the second limit in (\ref{eqn:lim1}) holds.  Since $\int_{r_*}^\infty x(\mu_0(x) + \mu_1(x) )dx$ exists and $\lim_{x \to \infty}h$ exists, $\lim_{r \to \infty} \ln p$ exists and thus $\lim_{r \to \infty} p^2 h$ is a positive real.   Moveover, that  improper integral given in (ii) exists implies that $\lim_{r \to \infty} r(\mu_0(r) + \mu_1(r)) =0$.  But the existence of the first limit in (ii) implies $\lim_{r \to \infty} r \mu_0(r) = 0$.  Thus $\lim_{r \to \infty} r \mu_1(r) = 0$.  Hence, the third limit in (\ref{eqn:lim1}) holds.   However, we can apply l'Hospital's Rule to $\lim_{r \to \infty} \frac{\mu_1(r)}{r^{-1}} = 0$ to show that  $\lim_{r \to \infty} (\mu_1)_r = 0$.  By (\ref{eqn:lim2}), the fourth limit in (\ref{eqn:lim1}) holds.  Thus $X$ is asymptotically flat. 
\end{proof}

We consider the examples presented above.  In Example 1, only the Minkowski space-time is asymptotically flat.  All the metrics in Example 2 are asymptoticaly flat.  The only space-times in Example 4 that is asymptotically flat are those for which $m_0 + m_1 =0$.

For the space-times in  Example 5 to be asymptotically flat we must require that $\lim_{r \to \infty} r^2 \rho(r) < \frac{1}{8\pi}$.  (The assumption that  $m(r) <\frac{r}{2}$, for all  $r >0$,  would only imply $\lim_{r \to \infty} r ^2\rho(r) \leq \frac{1}{8\pi}$.)   We must also require that $\int_{r_*}^\infty x(\rho(x) + P(x)) dx$ exist.

\section{Acknowledgement}
I would like to thank Kevin Croker for his assistance  revising the introduction to this paper and his suggestions regarding the readability of the paper in general.

\end{document}